\documentclass[reqno]{amsart}
\usepackage[utf8]{inputenc}
\usepackage{pmboxdraw}
\usepackage{amsmath}
\usepackage{amssymb}
\usepackage{xcolor}
\usepackage{graphicx}
\usepackage{float}
\usepackage[hidelinks]{hyperref}
\usepackage[shortlabels]{enumitem}

\calclayout

\usepackage{cite}
\newtheorem{theorem}{Theorem}
\newtheorem*{theoremext}{Theorem}

\newtheorem{definition}{Definition}
\newtheorem{proposition}[theorem]{Proposition}

\newtheorem{corollary}[theorem]{Corollary}

\newcommand{\C}{\mathbb{C}}
\newcommand{\N}{\mathbb{N}}

\newcommand{\R}{\mathbb{R}}

\newcommand{\ler}[1]{\left( #1 \right)}
\newcommand{\lesq}[1]{\left[ #1 \right]}
\newcommand{\lers}[1]{\left\{ #1 \right\}}
\newcommand{\hohc}{\cH \otimes \cH^*}

\newcommand{\abs}[1]{\left| #1 \right|}

\newcommand{\inner}[2]{\left< #1,#2 \right>}

\newcommand{\bra}[1]{\langle #1 |}
\newcommand{\ket}[1]{| #1 \rangle}
\newcommand{\Bra}[1]{\langle\langle #1 ||}
\newcommand{\Ket}[1]{|| #1 \rangle\rangle}

\newcommand{\be}{\begin{equation}}
\newcommand{\ee}{\end{equation}}
\newcommand{\ba}{\begin{array}}
\newcommand{\ea}{\end{array}}

\newcommand{\cH}{\mathcal{H}}
\newcommand{\cB}{\mathcal{B}}
\newcommand{\cK}{\mathcal{K}}
\newcommand{\cP}{\mathcal{P}}

\newcommand{\cS}{\mathcal{S}}
\newcommand{\cC}{\mathcal{C}}

\newcommand{\cT}{\mathcal{T}}

\newcommand{\cL}{\mathcal{L}}

\newcommand{\tr}{\mathrm{tr}}
\newcommand{\dd}{\mathrm{d}}

\newcommand{\cA}{\mathcal{A}}

\newcommand{\sh}{\cS\ler{\cH}}

\author[Gergely Bunth]{Gergely Bunth}
\address{Gergely Bunth, HUN-REN Alfr\'ed R\'enyi Institute of Mathematics\\ Re\'altanoda u. 13-15.\\Budapest H-1053\\ Hungary\\ and Department of Analysis and Operations Research, Institute of Mathematics, Budapest University of Technology and Economics\\ M\H{u}egyetem rkp. 3. \\ Budapest H-1111 \\ Hungary}
\email{bunth.gergely@renyi.hu}

\author[Eszter Szab\'o]{Eszter Szab\'o}
\address{Eszter Szab\'o, ELTE E\"otv\"os Lor\'and University \\ P\'azm\'any P\'eter s\'et\'any 1/A \\ Budapest H-1117 \\ Hungary}
\email{szaeszba@student.elte.hu}

\author[D\'aniel Virosztek]{D\'aniel Virosztek}
\address{D\'aniel Virosztek, HUN-REN Alfr\'ed R\'enyi Institute of Mathematics\\ Re\'altanoda u. 13-15.\\Budapest H-1053\\ Hungary}
\email{virosztek.daniel@renyi.hu}

\date{}

\subjclass[2020]{Primary: 49Q22; 81P16. Secondary: 81Q10.}

\keywords{quantum optimal transport, Wasserstein isometries, symmetric cost operators}

\thanks{Bunth and Virosztek were supported by the Momentum program of the Hungarian Academy of Sciences under grant agreement no. LP2021-15/2021, and by the Hungarian National Research, Development and Innovation Office (NKFIH) under grant agreement no. Excellence\_151232. Virosztek was partially supported also by the ERC Synergy Grant No. 810115.}

\title{Quantum Wasserstein isometries of the $n$-qubit state space: a Wigner-type result}

\begin{document}

\begin{abstract}
We determine the isometry group of the $n$-qubit state space with respect to the quantum Wasserstein distance induced by the so-called symmetric transport cost for all $n \in \N.$ It turns out that the isometries are precisely the Wigner symmetries, that is, the unitary or anti-unitary conjugations. 
\end{abstract}

\maketitle

%================================================================================
%================================================================================
\section{Introduction}

%================================================================================
\subsection{Motivation and main result} \label{subsec:motiv-and-main-res}

The theory of optimal transportation is an area of intense research within the field of analysis, and it is strongly connected to a number of fascinating topics including the geometry of metric measure spaces \cite{LottVillani,Sturm4,Sturm5,Sturm6}, variational analysis \cite{FigalliMaggi1,FigalliMaggi2}, the study of physical evolution equations \cite{JKO-97a,JKO-97b,JKO-98} and stochastic partial differential equations \cite{Hairer2,Hairer3}, and probability theory \cite{bgl,Butkovsky}. The theory was born in 1781 by the seminal work of Monge \cite{monge1781memoire}, and the 1940s have seen groundbreaking contributions to the field by Kantorovich \cite{kantorovich1942-transloc, kantorovich1948-monge}. At the end of the 1980s, research in optimal transport theory gained new momentum when Brenier's work on the structure of optimal transport maps \cite{Brenier-polar-fra,Brenier-polar-en} inspired numerous experts working on fluid mechanics or statistical and mathematical physics to apply optimal transport techniques. 
\par
Quantum mechanical counterparts of the classical optimal transport theory and analogues of the Wasserstein distances have also been proposed in recent decades. We refer to the book \cite{OTQS-book} and the survey papers \cite{beatty2025wasserstein, Trevisan-review-QOT} for a detailed overview of this rapidly evolving field, and we mention here only a few of the quantum optimal transport concepts introduced so far to illustrate the diversity of the topic. A dynamical theory relying on the classical Benamou-Brenier formula \cite{Ben-Bren00} and Jordan-Kinderlehrer-Otto theory \cite{JKO-98} was proposed by Carlen and Maas \cite{CarlenMaas-4, CarlenMaas-3, CarlenMaas-2}, see also the works of Datta, Rouz\'e \cite{DattaRouze1,DattaRouze2} and Wirth \cite{wirth-NC-transport-metric, Wirth-dual}. Caglioti, Golse, Mouhot, and Paul used quantum couplings to define quantum optimal transport problems and induced Wasserstein distances \cite{CagliotiGolsePaul-towardsqot, CagliotiGolsePaul, GolseMouhotPaul, GolsePaul-OTapproach, GolsePaul-wavepackets, GolseTPaul-pseudometrics} while in the De Palma-Trevisan approach \cite{DPT-AHP,DPT-lecture-notes} quantum channels realize the transport --- see also \cite{BPTV-metric-24, BPTV-p-Wass, BPTV-25-Kantorovich, BPTV-Petz-swap, Wirth-triangle} for further developments in this direction. The concept of Friedland, Eckstein, Cole and Życzkowski \cite{Friedland-Eckstein-Cole-Zyczkowski-MK, Cole-Eckstein-Friedland-Zyczkowski-QOT, BistronEcksteinZyczkowski} is also based on couplings, but with strikingly different cost operators. Duvenhage used modular couplings to define quantum Wasserstein distances \cite{Duvenhage1, Duvenhage-quad-Wass-vNA, Duvenhage3, Duvenhage-ext-quantum-det-bal}, and separable quantum Wasserstein distances have also been introduced and studied \cite{TothPitrik, toth-pitrik-2025, beatty-franca}.
\par
Both classical and quantum concepts of optimal transportation induce distances between probability measures resp. quantum states. These distances are called Wasserstein distances and they define an interesting geometric structure on classical and quantum state spaces. In striking contrast to the classical case, quantum Wasserstein distances are not genuine metrics, e.g., states may have a positive distance from themselves. Consequently, maps preserving quantum Wasserstein distances are not necessarily continuous, surjective, or even injective, although a bona fide isometry of a compact metric space has the above three properties. Accordingly, when describing the structure of the Wasserstein distance-preserving maps on the $n$-qubit state spaces in our main result, Theorem \ref{thm:n-qubit-symmetric}, we do not a priori assume injectivity, surjectivity or continuity. 
\par
The structure of maps on quantum state spaces preserving quantum information theoretical distances has been actively studied in recent decades \cite{MolnarPitrikVirosztek-BregmanJensen, Molnar-Bures,Virosztek-quantumchialfa,Virosztek-quantumfpreservers,Virosztek-BregmanJensenPreserver}, and the same holds for classical Wasserstein isometries \cite{BertrandKloeckner-hadamard, BertrandKloeckner-nonnegcurv, Kloeckner-euclidean, Kloeckner-ultrametrics,BertrandKloeckner-boundary,Rodriguez,GTV2020wasserstein-realline,GTV2022wass-hilb, Durham-paper, Balogh-Titkos-Virosztek-Heisenberg, Balogh-Titkos-Virosztek-Carnot}, as well. Still, the first studies on quantum Wasserstein isometries were published only very recently \cite{GPTV23, SV-2025-qubit-QW-isom}. These works concerned the simplest quantum state space, the space of qubits, and the authors characterized its isometries with respect to quantum Wasserstein distances and divergences induced by various distinguished transport costs. An important example is the case of the so-called \emph{symmetric transport cost}, where the observable quantities inducing the cost are the \emph{Pauli operators}. In this case, the quantum Wasserstein isometries are precisely the \emph{Wigner symmetries}, that is, the unitary and anti-unitary conjugations. An exciting phenomenon is that if we remove one Pauli operator from the set of observables, then the structure of isometries becomes much richer \cite{GPTV23}, and even non-surjective and non-injective (!) isometries enter the game --- we note that the quantum Wasserstein distance is not a genuine metric, and that is the reason why non-injective isometries can possibly exist.
\par
The goal of this paper is to study the quantum Wasserstein isometries of the $n$-qubit state spaces for all $n \in \N$ and to give a complete description of the isometry groups when the transport cost is induced by all tensor products of Pauli operators. Our main result is formulated in Theorem \ref{thm:n-qubit-symmetric} below --- all the necessary notions and notations will be introduced in Subsection \ref{subsec:basic-notions-notation}.
\begin{theorem} \label{thm:n-qubit-symmetric}
    A map $\Phi: \cS\ler{\C^{2^n}} \to \cS\ler{\C^{2^n}}$ is a quantum Wasserstein isometry with respect to the symmetric transport cost, that is,  
    \begin{align}
        D_{\mathrm{sym}, n}\ler{\Phi(\rho), \Phi(\omega)}=D_{\mathrm{sym}, n}\ler{\rho,\omega} \qquad \ler{\rho,\omega \in \cS\ler{\C^{2^n}}}
    \end{align}
    if and only if $\Phi$ is a Wigner symmetry, that is, there exists a unitary or anti-unitary operator $U: \C^{2^n} \to \C^{2^n}$ such that $\Phi(\rho)=U\rho U^*$ for all $\rho \in \cS\ler{\C^{2^n}}.$
\end{theorem}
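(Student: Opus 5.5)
The plan is to reduce the statement to Wigner's theorem on pure states, and then recover the action on mixed states by measuring distances to pure states. I work in the De Palma--Trevisan framework. The cost operator is
\[
C_{\mathrm{sym},n}=\sum_{A}\ler{A\otimes I - I\otimes A^T}^2,
\]
where $A$ runs over the $4^n$ Pauli strings (the identity string contributes nothing), and $D_{\mathrm{sym},n}^2(\rho,\omega)$ is the infimum of $\tr\ler{\Pi\, C_{\mathrm{sym},n}}$ over quantum couplings $\Pi$ of $\rho$ and $\omega$. The key algebraic tool is the completeness relation for Pauli strings,
\[
\sum_A A X A = 2^n\,\tr(X)\, I ,
\]
which is equivalent to $\sum_A \tr(AX)\tr(AY)=2^n\tr(XY)$.

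The easy direction comes first. For a unitary $U$, the family $\{U A U^*\}$ is an orthonormal basis (for the normalized Hilbert--Schmidt product) of the same real space as the Pauli strings. The operator $\sum_A A\otimes A^T$ depends only on that space and not on the chosen orthonormal basis. Hence $C_{\mathrm{sym},n}$ is invariant under $U\otimes \overline U$, and couplings transform accordingly. The anti-unitary case reduces to transposition, which maps the Pauli family to itself up to signs, so again the cost is preserved.

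For the converse, I proceed in three steps.

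\emph{Step 1: pure states go to pure states.} I use the known fact that $D_{\mathrm{sym},n}(\rho,\rho)=0$ if and only if $\rho$ is pure. This holds because the observables generate the full matrix algebra, and the paper's preliminaries should provide it or something close. Since $\Phi$ preserves self-distances, $\Phi$ maps pure states to pure states.

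\emph{Step 2: distances involving a pure state.} If one marginal is pure, the only coupling is the product coupling. Therefore, for $\psi$ pure and $\rho$ arbitrary,
\[
D_{\mathrm{sym},n}^2(\rho,\psi)=\sum_A\ler{\tr\rho A^2+\bra{\psi}A^2\ket{\psi}-2\tr(\rho A)\bra{\psi}A\ket{\psi}} .
\]
Using $A^2=I$ and the completeness relation, this equals
\[
2(4^n-1)-2\ler{2^n\bra{\psi}\rho\ket{\psi}-1}.
\]
In particular, $D_{\mathrm{sym},n}$ restricted to pure states is a strictly decreasing function of the transition probability $\abs{\inner{\psi}{\varphi}}^2$. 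So $\Phi$ preserves transition probabilities on pure states, and Wigner's theorem gives a unitary or anti-unitary $U$ with $\Phi(\ket{\psi}\bra{\psi})=U\ket{\psi}\bra{\psi}U^*$ for every unit vector $\psi$.

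\emph{Step 3: extension to mixed states.} Replace $\Phi$ by $\Psi(\rho)=U^*\Phi(\rho)U$. This is again an isometry, by the easy direction, and it fixes every pure state. For arbitrary $\rho$ and pure $\psi$, the formula of Step 2 together with $\Psi(\psi)=\psi$ gives
\[
\bra{\psi}\Psi(\rho)\ket{\psi}=\bra{\psi}\rho\ket{\psi}
\]
for all unit vectors $\psi$. The quadratic form determines the operator, so $\Psi(\rho)=\rho$ and hence $\Phi(\rho)=U\rho U^*$.

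The main obstacle is Step 1, namely a rigorous characterization of the pure states as exactly the states with zero self-distance. The implication ``pure $\Rightarrow$ zero self-distance'' is direct: take the coupling supported on the maximally correlated vector of $\psi\otimes\overline\psi$, which lies in the kernel of every term $A\otimes I - I\otimes A^T$. For the reverse implication I argue as follows. If $\tr\ler{\Pi\, C_{\mathrm{sym},n}}=0$, then $\Pi$ is supported in the common kernel of all the operators $A\otimes I-I\otimes A^T$. This common kernel is spanned by the maximally entangled vector, because the $A$ generate $M_{2^n}(\C)$. A coupling supported on that single vector must have maximally mixed marginals, i.e. $\rho=I/2^n$, and this would need to be excluded separately. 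If the trace argument fails there, I would fall back on the explicit self-distance formula via Wigner--Yanase skew information, $D^2(\rho,\rho)=2\sum_A\ler{\tr\rho A^2-\tr\sqrt\rho A\sqrt\rho A}$, if the earlier sections provide it. By completeness this equals $2\cdot 4^n-2\cdot 2^n(\tr\sqrt\rho)^2$, which vanishes exactly when $\tr\sqrt\rho=1$, that is, exactly for pure states.
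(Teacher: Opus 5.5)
Your Step 1 fails because the characterization it uses is false; in fact it is inverted.

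\textbf{Pure states have positive self-distance.} For $\rho=\ket{\psi}\bra{\psi}$ there is no freedom in the coupling: $\cC(\rho,\rho)=\lers{\rho\otimes\rho^T}$. So your own Step 2 formula gives $D_{\mathrm{sym},n}^2(\rho,\rho)=2^{2n+1}-2^{n+1}>0$. Your ``direct'' argument breaks for this reason. The product coupling is the projection onto $\Ket{\ket{\psi}\bra{\psi}}$, and $(A\otimes I-I\otimes A^T)\Ket{X}=\Ket{AX-XA}$. Hence that vector lies in the kernel of the $A$-term only if $\psi$ is an eigenvector of $A$, which cannot hold for all Pauli strings at once.

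\textbf{Your reverse argument proves the opposite.} It is correct as an argument. Since the common kernel is spanned by $\Ket{I_n}$, zero self-distance forces $\rho=I_n/2^n$. Indeed $D_{\mathrm{sym},n}(I_n/2^n,I_n/2^n)=0$, witnessed by the coupling $2^{-n}\Ket{I_n}\Bra{I_n}$. So the maximally mixed state is the one to ``exclude,'' and it is exactly the zero-distance case.

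\textbf{Your fallback formula agrees.} The expression $2\cdot 4^n-2^{n+1}(\tr\sqrt\rho)^2$ vanishes if and only if $(\tr\sqrt\rho)^2=2^n$, i.e.\ if and only if $\rho=I_n/2^n$. At $\tr\sqrt\rho=1$ it equals $2^{2n+1}-2^{n+1}\neq 0$.

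So, as written, nothing shows that $\Phi$ maps pure states to pure states, and Wigner's theorem cannot be invoked in Step 2.

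The rest of your argument matches the paper: the (anti-)unitary invariance of the cost, the product-coupling formula $D_{\mathrm{sym},n}^2(\rho,\psi)=2^{2n+1}-2^{n+1}\bra{\psi}\rho\ket{\psi}$, Wigner's theorem, and the quadratic-form argument after composing with $U^*$. Step 1 can be repaired in either of two ways.

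\textbf{The paper's route.} The product coupling gives $D_{\mathrm{sym},n}\le 2^{n+1/2}$, with equality only for Hilbert--Schmidt orthogonal pairs. Hence $\rho$ is pure if and only if it belongs to a family of $2^n$ states at pairwise distance $2^{n+1/2}$, and $\Phi$ transports this property.

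\textbf{Your self-distance idea with the extremum reversed.} The canonical purification $\Ket{\sqrt\rho}\Bra{\sqrt\rho}\in\cC(\rho,\rho)$ gives $D_{\mathrm{sym},n}^2(\rho,\rho)\le 2^{2n+1}-2^{n+1}(\tr\sqrt\rho)^2$. If $\mathrm{rank}\,\rho\ge 2$, then $(\tr\sqrt\rho)^2>1$, so this is strictly below $2^{2n+1}-2^{n+1}$. Thus pure states are exactly the states of maximal self-distance $\ler{2^{2n+1}-2^{n+1}}^{1/2}$, and an isometry preserves self-distances. This needs only the upper bound, not the exact value of $D_{\mathrm{sym},n}(\rho,\rho)$, and is slightly shorter than the paper's diameter-based characterization.
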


%================================================================================
\subsection{Basic notions, notation} \label{subsec:basic-notions-notation}
As the results presented in this paper are finite-dimensional, we assume for simplicity throughout this work that $\cH$ is a complex finite-dimensional Hilbert space.
Let $\cL(\cH)^{sa}$ denote the set of self-adjoint operators on $\cH,$ and let $\cB(\cH)$ stand for the algebra of all linear operators on $\cH.$ The state space $\cS(\cH)$ on $\cH$ is the set of all positive operators of unit trace on $\cH,$ that is, $\cS(\cH)=\lers{ X \in \cB(\cH) \, \middle| \, X \geq 0, \, \tr_{\cH}\lesq{X}=1}.$ Here and in the following, the relation $\geq$ between self-adjoint operators is the semidefinite (or L\"owner) order. The set of pure states, that is, rank-$1$ projections, on $\cH$ is denoted by $\cP_1(\cH).$ The transpose $A^T$ of a linear operator $A \in \mathrm{Lin}(\cH, \cK)$ is the linear operator mapping $\cK^*$ into $\cH^*$ which is defined by the property that $\varphi(Ax)=\ler{A^T \varphi}(x)$ for all $x \in \cH$ and $\varphi \in \cK^*.$ Here $\cK$ is another finite-dimensional complex Hilbert space, and $\cH^*$ and $\cK^*$ denote the dual spaces of $\cH$ resp. $\cK.$ In this finite-dimensional setting, the support of $A \in \cB(\cH)$ coincides with its range, that is, $\mathrm{supp}(A)=\mathrm{ran}(A).$ The spectrum of an operator $A \in \cB(\cH)$ is denoted by $\mathrm{spec}(A).$ A linear map $\Psi: \cB(\cK) \to \cB(\cH)$ is called a quantum channel if it is completely positive and trace-preserving.
\par 
{\newcommand*{\braket}[2]{\left<#1\right.\left|\,#2\right>}
\newcommand{\Lin}{\text{Lin}}
Here and throughout this work, we will use the canonical linear isomorphism between $\cB(\cH)$ and $\Lin  (\mathbb{C},\cH)\otimes\Lin (\mathbb{C}^*,\cH^*)$ which is the continuous linear extension of the map
\begin{align}
    \ket{\psi}\bra{\varphi} \mapsto \ket{\psi} \otimes \bra{\varphi}^T \qquad \ler{\psi, \varphi \in \cH}. \nonumber
\end{align}
That is, for $\sum_{i}\lambda_i\ket{a_i}\bra{a_i}=A\in \cB(\cH)$ and any two orthonormal bases $\mathcal{E},\mathcal{F}$ in $\cH,$ the image $\Ket{A}$ of $A$ under this canonical isomorphism is given in the following basis-independent way:
$$
\ket{\ket{A}}:=\sum_{i,j}\braket{e_i}{Af_j}\ket{e_i}\otimes\bra{f_j}^T=\sum_{j}\ket{Af_j}\otimes\bra{f_j}^T=\sum_{i}\ket{e_i}\otimes\bra{A^*e_i}^T
$$
$$
\hspace{-10pt}=\sum_{i,j} \braket{e_i}{Ae_j}\ket{e_i}\otimes \bra{e_j}^T=\sum_i \lambda_i\ket{a_i}\otimes\bra{a_i}^T.
$$
Up to trivial isomorphisms, $\Ket{A}$ can be viewed as an element of $\Lin(\mathbb{C},\hohc)$. Then $\Bra{A}$ is defined naturally as $\Bra{A}=\Ket{A}^*$ with respect to the Hilbert-Schmidt inner product, that is, 
$$
\Bra{A}:=\sum_i \bar{\lambda_i}\bra{a_i}\otimes\ket{a_i}^T.
$$
It is immediate from this definition that

\begin{align} \label{eq:product-rule}
    B\otimes C^T\ket{\ket{A}}=\ket{\ket{BAC}} \, \text{ for all } A, B, C \in \cB(\cH), 
\end{align}  
and that 
\begin{align} \label{eq:canonical-purification}
\ket{\ket{\sqrt{\rho}}}\bra{\bra{\sqrt{\rho}}}=\sum_{i,j}\sqrt{\lambda_i\lambda_j}\ket{r_i}\bra{r_j}\otimes\left(\ket{r_i}\bra{r_j}\right)^T
\end{align}
is a basis-independent purification, the so-called canonical purification \cite{Holevo} of $\rho=\sum_{i}\lambda_i\ket{r_i}\bra{r_i}$ --- see also Lemma 1 and Definition 2 in \cite{DPT-AHP}.
}
\par
The classical quadratic optimal transport problem \cite{Ambrosio-otbook1, Ambrosio-otbook2, Figalli-book, Villani1, Villani2} on a complete and separable metric space $(X,d)$ is the following: given Borel probability measures $\mu$ and $\nu$ on $(X,d)$ with finite second moment, the goal is to minimize the total transport cost
\begin{align}
    \mathrm{Cost}(\pi)=\iint_{X \times X} d^2(x,y) \dd \pi(x,y) \nonumber
\end{align}
among all admissible transport plans $\pi.$ The admissible transport plans between $\mu$ and $\nu$ are exactly the couplings of $\mu$ and $\nu,$ that is, those probability measures on $X \times X$ that admit $\mu$ and $\nu$ as first resp. second marginal. The squared quadratic Wasserstein distance between $\mu$ and $\nu$ is defined as the optimal quadratic transport cost, that is,
\begin{align}
    d_{W_2}^2(\mu,\nu)=\inf\lers{\iint_{X \times X} d^2(x,y) \dd \pi(x,y) \, \middle| \, \pi \in \Gamma(\mu, \nu)},
\end{align}
where $\Gamma(\mu, \nu)$ stands for the set of all couplings of $\mu$ and $\nu.$
\par
In \cite{DPT-AHP}, De Palma and Trevisan introduced a concept of optimal transportation in quantum mechanics where quantum channels realize the transport the following way: 
the transport plans between the states $\rho$ and $\omega$ in $\cS(\cH)$ are quantum channels $\Phi: \cB\ler{\mathrm{supp}(\rho)} \to \cB(\cH)$ sending $\rho$ to $\omega,$ and a transport plan $\Phi$ gives rise to the quantum coupling $\Pi_{\Phi}$ the following way: 
\begin{align}
\label{eq:Pi-phi-def}
\Pi_{\Phi}=\ler{\Phi \otimes \mathrm{id}_{\mathcal{T}_1\ler{\cH^*}}} \ler{\Ket{\sqrt{\rho}}\Bra{\sqrt{\rho}}},
\end{align} 
where $\Ket{\sqrt{\rho}}\Bra{\sqrt{\rho}} \in \cS\ler{\hohc}$ is the canonical purification (see \eqref{eq:canonical-purification} above and also \cite{Holevo}) of the state $\rho \in \sh.$ 
It is easy to check that $\Pi_{\Phi}$ is a state on $\hohc$ such that its first marginal is $\omega$ while the second marginal is $\rho^T,$ that is, $\tr_{\cH^*}\lesq{\Pi_{\Phi}}=\omega$ and $\tr_{\cH}\lesq{\Pi_{\Phi}}=\rho^T.$ Therefore, the set of all quantum couplings of the states $\rho,\omega \in \sh$ (denoted by $\cC(\rho, \omega)$) was defined in \cite{DPT-AHP} by
\be \label{eq:q-coup-def}
\cC\ler{\rho, \omega}=\lers{\Pi \in \cS\ler{\cH \otimes \cH^*} \, \middle| \, \tr_{\cH^*} [\Pi]=\omega, \,  \tr_{\cH} [\Pi]=\rho^T}.
\ee
In other words, a coupling of $\rho$ and $\omega$ is a state $\Pi$ on $\hohc$ such that 
\be \label{eq:part-trace-def}
\tr_{\hohc}[\ler{A\otimes I_{\cH}^T} \Pi]=\tr_{\cH} [\omega A] \text{ and }
\tr_{\hohc}\lesq{\ler{I_{\cH} \otimes B^{T}} \Pi}=\tr_{\cH^*} [\rho^T B^T]=\tr_{\cH} [\rho B]
\ee
for all $A, B \in \cB(\cH).$ Note that if either $\rho$ or $\omega$ is a pure state, that is, a rank-one projection, then $\cC(\rho,\omega)$ is a singleton containing only the product (or independent) coupling $\omega \otimes \rho^T.$
\par
Given a finite collection of observable quantities $\cA=\lers{A_1, \dots, A_K},$ where $A_k \in \cL(\cH)^{sa}$ for all $k,$ the quadratic transportation cost of a coupling $\Pi \in \cC(\rho, \omega)$ in this finite-dimensional setting is defined the following way:
\begin{align} \label{eq:cost-in-fin-dim-and-quad-cost-op}
    \mathrm{Cost}_{\cA}(\Pi)=\tr_{\hohc}\lesq{\Pi C_{\cA}}, \text{ where } C_{\cA}=\sum_{k=1}^K \ler{A_k \otimes I^T - I \otimes A_k^T}^2. 
\end{align}
The operator $C_{\cA}$ defined by $\cA=\{A_1, \dots, A_K\}$ as in \eqref{eq:cost-in-fin-dim-and-quad-cost-op} is called the (quadratic) transport cost operator corresponding to $\cA.$
Moreover, the cost of the coupling $\Pi_\Phi$ corresponding to the channel $\Phi$ (see \eqref{eq:Pi-phi-def}) can be expressed referring to the channel instead of the coupling \cite[Sec. 5.2]{DPT-lecture-notes}:
\begin{align} \label{eq:cost-in-terms-of-channel}
    \mathrm{Cost}_{\cA}\ler{\Pi_\Phi}=\sum_{k=1}^K \ler{\tr_{\cH}\lesq{\Phi(\rho) A_k^2}
    +\tr_{\cH}\lesq{\rho A_k^2} -2 \tr_{\cH}\lesq{\sqrt{\rho}A_k\sqrt{\rho}\Phi^{\dagger}(A_k)}},
\end{align}
where $\Phi^{\dagger}: \cB(\cH) \to \cB\ler{\mathrm{supp}(\rho)}$ is the adjoint of the channel $\Phi: \cB\ler{\mathrm{supp}(\rho)} \to \cB(\cH)$ defined by the property that $\tr_{\cH}\lesq{\Phi(X)A}=\tr_{\mathrm{supp}(\rho)}\lesq{X \Phi^{\dagger}(A)}$ for all $X \in \cB\ler{\mathrm{supp}(\rho)}$ and $A \in \cB(\cH).$ 
In complete analogy with the classical case, optimal solutions of quantum transport problems give rise to quantum Wasserstein distances.

\begin{definition} \label{def:QW-distances-quadratic}
The quadratic quantum Wasserstein distance of the states $\rho, \omega \in \cS(\cH)$ with respect to the finite collection of observables $\cA=\lers{A_1, \dots, A_K}$ is denoted by $D_{\cA}$ and is defined as the square root of the minimal quadratic transport cost: 
\begin{align} \label{eq:quadratic-QW-dist-def}
    D_{\cA}^2(\rho, \omega)= \inf \lers{ \mathrm{Cost}_{\cA}(\Pi) \, \middle| \, \Pi \in \cC(\rho, \omega)}.
\end{align}
\end{definition}

Our work concerns $n$-qubit state spaces, and hence the tensor products of Pauli operators will play a central role.
The Pauli operators act on $\C^2$ and are defined by
\begin{align} \label{eq:Pauli-def}
\sigma_0=I=\lesq{\ba{cc}1&0\\0&1 \ea}, \,
\sigma_1=\sigma_x=\lesq{\ba{cc}0&1\\1&0 \ea}, \,
\sigma_2=\sigma_y=\lesq{\ba{cc}0&-i\\i&0 \ea}, \,
\sigma_3=\sigma_z=\lesq{\ba{cc}1&0\\0&-1 \ea}.
\end{align}
In \cite{GPTV23}, the authors considered the so-called \emph{symmetric} transport cost governed by the observables $\cA=\lers{\sigma_0, \sigma_1, \sigma_2, \sigma_3}$ which give rise to the $1$-qubit symmetric transport cost operator 
\begin{align} \label{eq:1-qubit-symm-cost-op-def}
C_{\text{sym},1}:=C_{\lers{\sigma_0, \sigma_1, \sigma_2, \sigma_3}}=\sum_{j=0}^3\left(\sigma_j\otimes I^T-I\otimes \sigma_j^T\right)^2.
\end{align}
Note that $\sigma_0=I$ does not contribute to the cost operator $C_{\text{sym},1}$ and hence it could have been omitted. Nevertheless, for symmetry reasons, it is more beneficial to keep it in $\cA.$
Our current work focuses on the transport cost which is generated by the collection of all possible tensor products of Pauli operators. That is, we consider the case 
\begin{align}
    \cA=\lers{ \bigotimes_{i=1}^n \sigma_{m(i)} \, \middle| \, m \in \{0,1,2,3\}^{\{1,2, \dots n\}} }. \nonumber
\end{align}
We denote the induced transport cost operator by $C_{\text{sym},n},$ which is given by
\begin{align} \label{eq:n-qubit-symmetric-cost-op-def}
    C_{\text{sym},n}=\sum_{m \in \{0,1,2,3\}^{\{1,2, \dots n\}}} \ler{\bigotimes_{i=1}^n \sigma_{m(i)} \otimes I_n^T - I_n \otimes \ler{\bigotimes_{i=1}^n \sigma_{m(i)}}^T}^2,
\end{align}
where $I_n$ stands for the identity map on $\C^{2^n}.$
\par
Throughout this work, by isometry we mean simply a distance-preserving transformation, and we do not assume surjectivity, injectivity or continuity. Clearly, if the distance is a bona fide metric (and it is important to note that quantum Wasserstein distances are not!) then every isometry is injective and continuous. 
As mentioned in Subsection \ref{subsec:motiv-and-main-res}, even on the qubit state space there exist exotic (in the sense that they are non-injective, non-continuous, and non-surjective) maps that preserve certain quantum Wasserstein distances --- see Theorem 2 in \cite{SV-2025-qubit-QW-isom} and Theorem 2 in \cite{GPTV23}.

 %================================================================================
%================================================================================
\section{The spectral resolution of the $n$-qubit symmetric transport cost operator}

A vital part of understanding the structure of quantum Wasserstein isometries is the clear understanding of the spectral properties of the symmetric transport cost operator $C_{\text{sym},n}$ given by \eqref{eq:n-qubit-symmetric-cost-op-def}. This section is devoted to this key ingredient of the proof of Theorem \ref{thm:n-qubit-symmetric}. The following Proposition concerns quadratic transport cost operators generated by two-level observables, and asserts in particular that $C_{\text{sym}, n}$ is a constant multiple of a projection of rank $2^{2n}-1.$

%================================================================================
%\subsection{The qubit case}
{%\color{blue}
\newcommand{\set}[1]{\left\{#1\right\}}
\newcommand{\Lin}{\text{Lin}}
\begin{proposition} \label{prop:spectral-resolution}
    If all observables in $\cA=\set{A_k}_{k=1}^K \subset \cL(\cH)^{sa}$ are two-level observables with symmetric spectrum, that is, $A_k=\lambda_kP_k-\lambda_k(I-P_k)$, where $P_k$ is an ortho-projection on $\cH$ and $\lambda_k \in \R,$ then
    \be\label{eq:cost_2level}
        C_{\cA}=\sum_{k=1}^K\ler{A_k\otimes I_{\cH}^T-I_{\cH}\otimes A_k^T}^2
        =\sum_{k=1}^K 2\ler{\lambda_k^2 I_{\cH}\otimes I_{\cH}^T-A_k\otimes A_k^T}.
    \ee
    Equivalently, if all observables in $\cA=\set{\bigotimes_{i=1}^nA_{k,i}}_{k=1}^K \subset \cL\ler{\cH^{\otimes n}}^{sa}$ are tensor products of two-level observables on $\cH$ such that $\mathrm{spec}\ler{A_{k,i}}=\lers{-\lambda_k, \lambda_k}$ for all $k$ and $i,$ then
    \be\label{eq:cost_2level_tensor}
        C_{\cA}=\sum_{k=1}^K 2\left(\lambda_k^{2n} I_{\cH^{\otimes n}}\otimes I_{\cH^{\otimes n}}^T-\bigotimes_{i=1}^n A_{k,i}\otimes \bigotimes_{i=1}^n A_{k,i}^T \right).
    \ee
    In particular, if $\cA=\set{\bigotimes_{i=1}^n\sigma_{k,i}}_{k=1}^K \subset \cL\ler{\C^{2^n}}^{sa}$ consists only of tensor products of Pauli operators, then
    \be\label{eq:cost_Pauli_tensor}
        C_{\cA}=\sum_{k=1}^K 2\left(I_n\otimes I_n^T-\bigotimes_{i=1}^n\sigma_{k,i}\otimes \bigotimes_{i=1}^n\sigma_{k,i}^T \right),
    \ee
    and, as a consequence, all such cost operators commute with each other and have a common eigenbasis, the basis formed by all tensor products of Pauli operators viewed as elements of $\C^{2^n} \otimes \ler{\C^{2^n}}^*.$
    If furthermore $\cA=\set{\bigotimes_{i=1}^n\sigma_{k,i}}_{k=1}^K \subset \cL\ler{\C^{2^n}}^{sa}$ contains each of the $4^n$ distinct tensor products of Pauli operators exactly once, then
    \be\label{eq:n-qubit-symmetric-cost-spectral}
        C_{\cA}=C_{\text{sym},n}=2^{2n+1}I_n\otimes I_n^T-2^{n+1} \ket{\ket{I_n}}\bra{\bra{I_n}}=2^{2n+1}\left(I_n\otimes I_n^T-\frac{1}{2^n}\ket{\ket{I_n}}\bra{\bra{I_n}}\right).
    \ee
\end{proposition}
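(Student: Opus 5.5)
The plan is to expand each square and then read off the spectrum from the Pauli basis.

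\emph{Step 1: the two-level identity.} For a single two-level observable we expand
\begin{align*}
\ler{A_k\otimes I^T-I\otimes A_k^T}^2=A_k^2\otimes I^T+I\otimes (A_k^T)^2-2A_k\otimes A_k^T,
\end{align*}
using that $A_k\otimes I^T$ and $I\otimes A_k^T$ commute. Since $A_k=\lambda_kP_k-\lambda_k(I-P_k)$, we have $A_k^2=\lambda_k^2(P_k+(I-P_k))=\lambda_k^2I$, and likewise $(A_k^T)^2=(A_k^2)^T=\lambda_k^2 I^T$. Summing over $k$ gives \eqref{eq:cost_2level}.

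\emph{Step 2: tensor products and the Pauli case.} For $A_k=\bigotimes_i A_{k,i}$ with $\mathrm{spec}(A_{k,i})=\{-\lambda_k,\lambda_k\}$, each factor squares to $\lambda_k^2 I$. Hence $A_k^2=\lambda_k^{2n}I$, and the same expansion yields \eqref{eq:cost_2level_tensor}. Conversely, $A_k$ is then again a two-level observable with symmetric spectrum $\pm\lambda_k^n$, which explains the word ``equivalently''. Taking $\lambda_k=1$ gives \eqref{eq:cost_Pauli_tensor}.

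For the common eigenbasis we use \eqref{eq:product-rule}. For a Pauli string $\sigma_m$,
\begin{align*}
\ler{\sigma_k\otimes\sigma_k^T}\Ket{\sigma_m}=\Ket{\sigma_k\sigma_m\sigma_k}=\pm\Ket{\sigma_m}.
\end{align*}
The sign is $+$ if $\sigma_k$ and $\sigma_m$ commute and $-$ if they anticommute, since distinct Pauli strings either commute or anticommute factorwise. The $4^n$ vectors $\Ket{\sigma_m}$ are orthogonal in the Hilbert–Schmidt inner product, because $\tr[\sigma_m\sigma_{m'}]=2^n\delta_{mm'}$. They therefore form an orthogonal eigenbasis of $\C^{2^n}\otimes(\C^{2^n})^*$ for every operator $\sigma_k\otimes\sigma_k^T$. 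It follows that all cost operators of the form \eqref{eq:cost_Pauli_tensor} commute.

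\emph{Step 3: the full symmetric case.} This is the main step: we must compute $S:=\sum_{k}\sigma_k\otimes\sigma_k^T$ over all $4^n$ Pauli strings. First we handle $\Ket{I_n}$. Every $\sigma_k$ commutes with $I_n$, so $S\Ket{I_n}=4^n\Ket{I_n}$. Next take $m\neq 0$. Some factor $\sigma_{m(i)}$ is non-identity, and on that site exactly two of the four one-qubit Paulis commute with it and two anticommute. The sign $\pm$ is multiplicative over sites, so summing over the free choice at site $i$ gives zero, and hence $S\Ket{\sigma_m}=0$.

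Consequently
\begin{align*}
S=4^n\frac{\Ket{I_n}\Bra{I_n}}{\inner{\Ket{I_n}}{\Ket{I_n}}}=4^n\frac{\Ket{I_n}\Bra{I_n}}{2^n}=2^n\Ket{I_n}\Bra{I_n}.
\end{align*}
Substituting into \eqref{eq:cost_Pauli_tensor} with $K=4^n$ gives $C_{\mathrm{sym},n}=2\cdot 4^n I_n\otimes I_n^T-2^{n+1}\Ket{I_n}\Bra{I_n}$, which is \eqref{eq:n-qubit-symmetric-cost-spectral}.

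Alternatively, the same identity is the operator form of the completeness relation $\sum_m \sigma_m X\sigma_m=2^n\tr[X]\,I_n$, transported through the canonical isomorphism.

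The only genuinely nontrivial point is the cancellation in Step 3, namely the site-wise commuting/anticommuting sign count. The remaining bookkeeping concerns the normalization $\|\Ket{I_n}\|^2=2^n$.
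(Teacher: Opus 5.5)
Your proof is correct and follows essentially the same route as the paper. You expand the square directly using $A_k^2=\lambda_k^2 I$ (the paper first rewrites $A_k$ through $P_k$, which amounts to the same thing) and then diagonalize on the orthogonal basis $\Ket{\sigma_m}$ via \eqref{eq:product-rule}, where your site-wise cancellation is exactly the paper's factorization $\sum_{k}\prod_i j_{m(k,i),m'(i)}=\prod_i\sum_{m=0}^3 j_{m,m'(i)}=\prod_i 4\delta_{0,m'(i)}$, and your explicit normalization $\|\Ket{I_n}\|^2=2^n$ spells out the final step the paper leaves implicit.
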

\begin{proof}
    We prove \eqref{eq:cost_2level} first. Clearly,
    \begin{align}
        C_{\cA}&=\sum_{k=1}^K\ler{\left(\lambda_kP_k-\lambda_k(I_{\cH}-P_k)\right)\otimes I_{\cH}^T-I_{\cH}\otimes \left(\lambda_k P_k-\lambda_k(I_{\cH}-P_k)\right)^T}^2 \nonumber \\
         &=\sum_{k=1}^K\lambda_k^2\ler{\left(P_k-(I_{\cH}-P_k)\right)\otimes I_{\cH}^T-I_{\cH}\otimes \left(P_k-(I_{\cH}-P_k)\right)^T}^2 \nonumber \\
         &=\sum_{k=1}^K4\lambda_k^2 \left[\frac{A_k+\lambda_k I_{\cH}}{2\lambda_k}\otimes\left(I_{\cH}-\frac{A_k+\lambda_k I_{\cH}}{2\lambda_k}\right)^T+\ler{I_{\cH}-\frac{A_k+\lambda_k I_{\cH}}{2\lambda_k}}\otimes \left(\frac{A_k+\lambda_k I_{\cH}}{2\lambda_k}\right)^T\right] \nonumber \\
         &=\sum_{k=1}^K 2 \ler{\lambda_k^2 I_{\cH}\otimes I_{\cH}^T-A_k\otimes A_k^T}. \nonumber
    \end{align}
    The equivalence of \eqref{eq:cost_2level} and \eqref{eq:cost_2level_tensor} is straightforward as $\bigotimes_{i=1}^n A_{k,i}$ is a two-level observable with symmetric spectrum if every $A_{k,i}$ is so, and one gets \eqref{eq:cost_2level} from \eqref{eq:cost_2level_tensor} as the special case $n=1.$  Eq. \eqref{eq:cost_Pauli_tensor} is a special case of \eqref{eq:cost_2level_tensor}. The cost operators of the form \eqref{eq:cost_Pauli_tensor} contain only tensor squares of Pauli product operators, and hence these cost operators commute with each other. Alternatively, one can see directly that the basis formed by all tensor products of Pauli operators is a common eigenbasis of costs of the form \eqref{eq:cost_Pauli_tensor}, that is, for any tensor product of Pauli operators $\bigotimes_{l=i}^n\sigma_{m'(i)}$ one gets
    \begin{align}
        &\sum_{k=1}^K2\left(I\otimes I^T-\bigotimes_{i=1}^n\sigma_{m(k,i)}\otimes \bigotimes_{i=1}^n\sigma_{m(k,i)}^T\right)\ket{\ket{\bigotimes_{i=1}^n\sigma_{m'(i)}}} \nonumber \\
        =&\sum_{k=1}^K2\left(1-\prod_{i=1}^nj_{m(k,i),m'(i)}\right)\ket{\ket{\bigotimes_{i=1}^n\sigma_{m'(i)}}}, \nonumber
    \end{align} 
    where
    \begin{align}
        j_{m,m'}:=
        \begin{cases}
            1 & \text{if } m=0, \text{ or } m'=0, \text{ or }m=m',\\
            -1 & \text{if } m\neq 0 \text{ and } m'\neq 0, \text{ and }m\neq m'.
        \end{cases} \nonumber
    \end{align}
    In the above computation we used that 
    \begin{align}
    \left(\sigma_{m}\otimes\sigma_{m}^T\right)\ket{\ket{\sigma_{m'}}}=j_{m,m'}\ket{\ket{\sigma_{m'}}}, \nonumber
    \end{align}
    which follows from the identity \eqref{eq:product-rule} and from the fact that Pauli operators either commute or anti-commute, and two Pauli operators commute if and only if they coincide or one of them is $\sigma_0=I_{\C^2}.$ Now if $\cA=\set{\bigotimes_{i=1}^n\sigma_{m(k,i)}}_{k=1}^K$ contains each of the $4^n$ ($I_{\C^2}^{\otimes n}$ included) distinct Pauli product terms exactly once, then 
    \begin{align}
        \sum_{k=1}^K2\left(1-\prod_{i=1}^nj_{m(k,i),{m'(i)}}\right)\ket{\ket{\bigotimes_{i=1}^n\sigma_{m'(i)}}}
        &=\left(2^{2n+1}-2\prod_{i=1}^n\left(\sum_{m=0}^3 j_{m,{m'(i)}}\right)\right)\ket{\ket{\bigotimes_{i=1}^n\sigma_{m'(i)}}}
         \nonumber \\      =\left(2^{2n+1}-2\prod_{i=1}^n\left(4\delta_{0,m'(i)}\right)\right)\ket{\ket{\bigotimes_{i=1}^n\sigma_{m'(i)}}}
        &=2^{2n+1}\left(1-\prod_{i=1}^n\delta_{0,m'(i)}\right)\ket{\ket{\bigotimes_{i=1}^n\sigma_{m'(i)}}} \nonumber\\
        &=2^{2n+1}\left(I\otimes I^T-\frac{1}{2^n}\ket{\ket{I}}\bra{\bra{I}}\right)\ket{\ket{\bigotimes_{i=1}^n\sigma_{m'(i)}}}. \nonumber
    \end{align}
    Note finally that the maps
    \begin{align}
        \set{\ket{\ket{\bigotimes_{i=1}^n\sigma_{m'(i)}}}}_{m'\in \set{0,1,2,3}^n} \nonumber
    \end{align}
    form a generator system in $\Lin(\C,\C^{2^n})\otimes\Lin(\C^*,(\C^{2^n})^*)$, which is in fact an orthogonal basis, from which \eqref{eq:n-qubit-symmetric-cost-spectral} follows.
\end{proof}
}

%================================================================================
%================================================================================
\section{Quantum Wasserstein isometries of the $n$-qubit state space with respect to the symmetric transport cost --- the proof of Theorem \ref{thm:n-qubit-symmetric}}

The proof of Theorem \ref{thm:n-qubit-symmetric} is divided into several steps. First, we prove that every Wigner symmetry (that is, unitary or anti-unitary conjugation) of the state space $\cS\ler{\C^{2^n}}$ is a quantum Wasserstein isometry. We recall that a linear map $U: \cH \to \cH$ is called unitary if it is surjective and $\inner{Ux}{Uy}=\inner{x}{y}$ for all $x,y \in \cH,$ while a conjugate-linear map $U: \cH \to \cH$ is called anti-unitary if it is surjective and $\inner{Ux}{Uy}=\inner{y}{x}=\overline{\inner{x}{y}}$ for all $x,y \in \cH.$ We also note that the transpose $A^T$ of a bounded conjugate-linear map $A: \cH \to \cK$ is the bounded conjugate-linear map $A^T$ that maps $\cK^*$ into $\overline{\cH}^*$ and is defined by the property  that $\varphi(Ax)=\ler{A^T \varphi}(x)$ for all $x \in \cH$ and $\varphi \in \cK^*,$ where $\overline{\cH}^*$ stands for the space of conjugate-linear functionals on $\cH.$ The precise statement is formulated in Proposition \ref{prop:Wigner-symmetries-are-isometries} below. The proof of Proposition \ref{prop:Wigner-symmetries-are-isometries} follows a pattern which is similar to the proof of the analogous statement for qubit in \cite{GPTV23}. However, the precise understanding of the spectral decomposition of the cost operator $C_{\text{sym},n}$ allows us to greatly simplify the computations, even though the $1$-qubit case discussed in \cite{GPTV23} is a very special case of the $n$-qubit case treated here.

\begin{proposition} \label{prop:Wigner-symmetries-are-isometries}
Let $\rho, \omega \in \cS\ler{\C^{2^n}}$ and $U$ be a unitary or anti-unitary transformation on $\C^{2^n}.$ Then 
\begin{align} \label{eq:unitary-invariance-of-D-sym-n}
    D_{\text{sym},n}(U \rho U^*, U \omega U^*)= D_{\text{sym},n}(\rho, \omega).
\end{align}
\end{proposition}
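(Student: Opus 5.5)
The plan is to transport couplings along the Wigner symmetry and check that the cost is unchanged, using the spectral form \eqref{eq:n-qubit-symmetric-cost-spectral} of $C_{\text{sym},n}$ from Proposition \ref{prop:spectral-resolution}. For a unitary or anti-unitary $U$ on $\C^{2^n}$, consider the map $V:=U\otimes (U^*)^T$ on $\C^{2^n}\otimes(\C^{2^n})^*$. By \eqref{eq:product-rule}, $V\Ket{A}=\Ket{UAU^*}$. In the anti-unitary case we use the transpose of conjugate-linear maps introduced just before the statement; then $V$ is again a (conjugate-linear) map sending $\Ket{A}$ to $\Ket{UAU^*}$. This is the point that needs care, and I would verify it on rank-one operators $\ket{\psi}\bra{\varphi}\mapsto U\ket{\psi}\otimes(\bra{\varphi}U^*)^T$. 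Since $V$ is unitary resp. anti-unitary, $\Pi\mapsto V\Pi V^*$ maps $\cS(\hohc)$ onto itself bijectively, with inverse given by $U^*$ in place of $U$.

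\textbf{Step 1: couplings are carried to couplings.} Take $\Pi\in\cC(\rho,\omega)$. I will show that $V\Pi V^*\in\cC(U\rho U^*,U\omega U^*)$. The marginal conditions \eqref{eq:part-trace-def} give
\begin{align}
\tr[(A\otimes I^T)V\Pi V^*]=\tr[V^*(A\otimes I^T)V\,\Pi]=\tr[(U^*AU\otimes I^T)\Pi]=\tr[\omega U^*AU]=\tr[U\omega U^*A]. \nonumber
\end{align}
The second marginal is handled in the same way. In the anti-unitary case the traces pick up a complex conjugation, but the quantities involved are real whenever $A$ is self-adjoint, and self-adjoint $A$ suffice to determine the marginals. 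Applying the same argument to $U^*$ gives the reverse inclusion, so $\Pi\mapsto V\Pi V^*$ is a bijection $\cC(\rho,\omega)\to\cC(U\rho U^*,U\omega U^*)$.

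\textbf{Step 2: the cost operator is invariant.} I will show $V^*C_{\text{sym},n}V=C_{\text{sym},n}$. The identity part of \eqref{eq:n-qubit-symmetric-cost-spectral} is clearly invariant, since $V^*V=I$. For the projection part, $V\Ket{I_n}=\Ket{UU^*}=\Ket{I_n}$, so
\begin{align}
V\Ket{I_n}\Bra{I_n}V^*=\Ket{I_n}\Bra{I_n}. \nonumber
\end{align}
Hence $\tr[V\Pi V^*C_{\text{sym},n}]=\tr[\Pi C_{\text{sym},n}]$; in the anti-unitary case the conjugation is harmless because the trace of a product of two self-adjoint operators is real. Taking the infimum over the bijectively corresponding coupling sets then yields \eqref{eq:unitary-invariance-of-D-sym-n}.

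The main obstacle is the anti-unitary case: one must set up the conjugate-linear transpose correctly so that $V$ is a well-defined anti-unitary on $\hohc$ satisfying $V\Ket{A}=\Ket{UAU^*}$. If this becomes messy, I would factor $U=WJ$, where $W$ is unitary and $J$ is entrywise complex conjugation in the computational basis. The unitary part is already covered, so it remains to check invariance under $J$ directly. On $\hohc$, $J$ acts as entrywise conjugation $K$ in the product basis. Under $\Pi\mapsto \overline{\Pi}$ the marginals become $\overline{\omega}$ and $\overline{\rho}^T$, and the cost becomes $\tr[\overline{\Pi}C]=\overline{\tr[\Pi\,\overline{C}]}$. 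It then suffices that $\overline{C_{\text{sym},n}}=C_{\text{sym},n}$, which holds because $\Ket{I_n}$ has real entries.
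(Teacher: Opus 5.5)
Your proposal is correct and follows essentially the same route as the paper. You conjugate couplings by $U\otimes (U^*)^T$ to send $\cC(\rho,\omega)$ into $\cC(U\rho U^*,U\omega U^*)$, and you get invariance of $C_{\text{sym},n}$ from its spectral form together with $\Ket{U^*I_nU}=\Ket{I_n}$. The paper instead proves the one-sided inequality and applies it to both $U$ and $U^*$, which is equivalent to your bijection of coupling sets, and your $U=WJ$ fallback is a clean way to make the anti-unitary case fully rigorous.
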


\begin{proof}
    It is sufficient to show that 
    \begin{align} \label{eq:unitary-decreasing-D-sym-n}
        D_{\text{sym},n}(U \rho U^*, U \omega U^*) \leq D_{\text{sym},n}(\rho, \omega) \qquad \ler{\rho, \omega \in \cS\ler{\C^{2^n}}}
    \end{align}
    for all unitary and anti-unitary $U,$ because the inverse of a(n anti-)unitary conjugation is again a(n anti-)unitary conjugation, and hence if \eqref{eq:unitary-decreasing-D-sym-n} holds for all $U,$ then the reversed inequality $D_{\text{sym},n}(U \rho U^*, U \omega U^*) \geq D_{\text{sym},n}(\rho, \omega)$ also holds as 
    \begin{align}
        D_{\text{sym},n}(\rho, \omega)
        =D_{\text{sym},n}(U^*(U \rho U^*)U, U^*(U \omega U^*)U) 
        \leq D_{\text{sym},n}(U \rho U^*, U \omega U^*).
        \nonumber
    \end{align}
    To prove \eqref{eq:unitary-decreasing-D-sym-n}, we first justify that
    \begin{align} \label{eq:inclusion-of-couplings}
        \ler{U\otimes (U^*)^T} \cC\ler{\rho, \omega} \ler{U\otimes (U^*)^T}^{*} \subseteq \cC\ler{U\rho U^*, U \omega U^*}
    \end{align}
    holds for all (anti-)unitary $U$ acting on $\C^{2^n}.$ Let $\Pi \in \cC(\rho, \omega)$ and assume that $\Pi=\sum_{l=1}^L A_l \otimes B_l^T$ for some $A_1, \dots, A_L, B_1, \dots, B_L \in \cB\ler{\C^{2^n}}.$ 
    Then 
    \begin{align}
        \ler{U\otimes (U^*)^T} \Pi \ler{U\otimes (U^*)^T}^{*}
        =\sum_{l=1}^L U A_l U^* \otimes (U^*)^T B_l^T U^T, \nonumber
    \end{align}
    and hence 
    \begin{align}
        \tr_{\ler{\C^{2^n}}^*} \lesq{\ler{U\otimes (U^*)^T} \Pi \ler{U\otimes (U^*)^T}^{*}}
        &=\sum_{l=1}^L U A_l U^* \tr_{\ler{\C^{2^n}}^*}\lesq{\ler{U B_l U^*}^T} 
        =\sum_{l=1}^L U A_l U^* \tr_{\ler{\C^{2^n}}^*} \lesq{B_l^T} \nonumber \\
        &=U \ler{\sum_{l=1}^L A_l \tr_{\ler{\C^{2^n}}^*} \lesq{B_l^T}} U^* 
        =U \ler{\tr_{\ler{\C^{2^n}}^*}[\Pi]} U^*
        =U \omega U^*. \nonumber
    \end{align}
    which shows that the first marginal of $\ler{U\otimes (U^*)^T} \Pi \ler{U\otimes (U^*)^T}^{*}$ is indeed $U \omega U^*.$ The second marginal condition
    $\tr_{\C^{2^n}}\lesq{\ler{U\otimes (U^*)^T} \Pi \ler{U\otimes (U^*)^T}^{*}}=\ler{U \rho U^*}^T$ can be justified similarly, and the positive semidefinite property of $\ler{U\otimes (U^*)^T} \Pi \ler{U\otimes (U^*)^T}^{*}$ is clear as (anti-)unitary conjugations preserve the positivity of operators. We used that taking the adjoint and taking the transpose commute, that is, $\ler{X^*}^T=\ler{X^T}^*$ for any $X \in \cB(\cH),$ and we shall use this identity also in the sequel. The next step is to prove that 
    $\ler{U\otimes (U^*)^T}^* C_{\text{sym},n}\ler{U\otimes (U^*)^T}= C_{\text{sym},n}$ for every $U.$ Indeed, by \eqref{eq:n-qubit-symmetric-cost-spectral} one gets 
    \begin{align} \label{eq:unitary-invariance-of-C-sym-n}
        \ler{U\otimes (U^*)^T}^* C_{\text{sym},n}\ler{U\otimes (U^*)^T}
        =\ler{U\otimes (U^*)^T}^* \ler{2^{2n+1}I_n\otimes I_n^T-2^{n+1} \ket{\ket{I_n}}\bra{\bra{I_n}}}\ler{U\otimes (U^*)^T} \nonumber \\
        =2^{2n+1}I_n\otimes I_n^T-2^{n+1} \ket{\ket{U^* I_n U}}\bra{\bra{ U^* I_n U}}
        =2^{2n+1}I_n\otimes I_n^T-2^{n+1} \ket{\ket{I_n}}\bra{\bra{I_n}}=C_{\text{sym},n},
    \end{align}
    where we used the identity \eqref{eq:product-rule}.
    Now we can rely on \eqref{eq:inclusion-of-couplings} and \eqref{eq:unitary-invariance-of-C-sym-n} and justify \eqref{eq:unitary-decreasing-D-sym-n} as follows:
    \begin{align}
        D_{\text{sym},n}^2(U \rho U^*, U \omega U^*)
        &=\inf \lers{\tr_{\C^{2^n} \otimes \ler{\C^{2^n}}^*}\lesq{\Gamma C_{\text{sym},n}} \, \middle| \Gamma \in \cC\ler{U\rho U^*, U \omega U^*}} \nonumber \\
        &\leq \inf \lers{\tr_{\C^{2^n} \otimes \ler{\C^{2^n}}^*}\lesq{\ler{U\otimes (U^*)^T} \Pi \ler{U\otimes (U^*)^T}^* C_{\text{sym},n}} \, \middle| \Pi \in \cC\ler{\rho, \omega}}  \nonumber \\
        &=\inf \lers{\tr_{\C^{2^n} \otimes \ler{\C^{2^n}}^*}\lesq{\Pi \ler{U\otimes (U^*)^T}^* C_{\text{sym},n}\ler{U\otimes (U^*)^T}} \, \middle| \Pi \in \cC\ler{\rho, \omega}} \nonumber\\
        &=\inf \lers{\tr_{\C^{2^n} \otimes \ler{\C^{2^n}}^*}\lesq{\Pi  C_{\text{sym},n}} \, \middle| \Pi \in \cC\ler{\rho, \omega}} = D_{\text{sym},n}^2(\rho, \omega).
    \end{align}
    Therefore, \eqref{eq:unitary-invariance-of-D-sym-n} holds, as desired.
\end{proof}

Now we start proving that only the Wigner symmetries act isometrically on $\cS\ler{\C^{2^n}}$ with respect to the quantum Wasserstein distance $D_{\text{sym}, n}.$ As a first step in this direction, we compute the diameter of $\cS\ler{\C^{2^n}}$ equipped with the Wasserstein distance $D_{\text{sym}, n}$ and characterize those cases when this diameter is realized. 

\begin{proposition} \label{prop:diameter-of-S-n}
For all $n \in \N$ we have 
\begin{align}
    \mathrm{diam}\ler{\cS\ler{\C^{2^n}},D_{\text{sym}, n}}
    =\sup \lers{D_{\text{sym}, n}(\rho,\omega) \, \middle| \, \rho, \omega \in \cS\ler{\C^{2^n}}}
    =2^{n+1/2},     \nonumber
\end{align}
and $D_{\text{sym}, n}(\rho,\omega)=2^{n+1/2}$ if and only if the independent coupling $\omega \otimes \rho^T$ of $\rho$ and $\omega$ is optimal with respect to the cost $C_{\text{sym}, n},$ and $\rho$ and $\omega$ are orthogonal in the Hilbert-Schmidt sense, that is, $\tr_{\C^{2^n}}[\rho \, \omega]=0.$
    
\end{proposition}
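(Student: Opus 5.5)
By \eqref{eq:n-qubit-symmetric-cost-spectral}, the cost of any coupling $\Pi \in \cC(\rho,\omega)$ splits into two explicit terms:
\begin{align}
    \mathrm{Cost}(\Pi)=\tr\lesq{\Pi C_{\text{sym},n}}=2^{2n+1}-2^{n+1}\bra{\bra{I_n}}\Pi\ket{\ket{I_n}}. \nonumber
\end{align}
Here $\tr[\Pi]=1$ and $\Pi\geq 0$, so $\bra{\bra{I_n}}\Pi\ket{\ket{I_n}}\geq 0$. Hence every coupling costs at most $2^{2n+1}$. This gives $D^2_{\text{sym},n}(\rho,\omega)\leq 2^{2n+1}$, i.e. $D_{\text{sym},n}\leq 2^{n+1/2}$.

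To see that the bound is attained, I will compute the cost of the independent coupling. Expand $\ket{\ket{I_n}}=\sum_i \ket{e_i}\otimes\bra{e_i}^T$. A direct computation gives
\begin{align}
    \bra{\bra{I_n}}(\omega\otimes\rho^T)\ket{\ket{I_n}}=\sum_{i,j}\inner{e_i}{\omega e_j}\inner{e_i}{\rho e_j}=\tr[\omega\rho^{T}]', \nonumber
\end{align}
where the last expression is just notation for the double sum. The main care point is the transpose/dual conventions, and I expect the double sum to equal $\tr_{\C^{2^n}}[\omega\rho]$. I would verify this via \eqref{eq:product-rule} and the Hilbert–Schmidt identity $\bra{\bra{A}}(B\otimes C^T)\ket{\ket{I}}=\tr[A^*BC]$ with $A=I$, after writing $\omega\otimes\rho^T$ as a sum of products. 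So the independent coupling has cost $2^{2n+1}-2^{n+1}\tr[\rho\omega]$.

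Take $\rho,\omega$ to be orthogonal pure states (possible since $2^n\geq 2$). For a pure state, $\cC(\rho,\omega)$ is the singleton $\{\omega\otimes\rho^T\}$, so
\begin{align}
    D^2_{\text{sym},n}(\rho,\omega)=2^{2n+1}-0=2^{2n+1}. \nonumber
\end{align}
This shows the diameter equals $2^{n+1/2}$, and the supremum is attained.

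For the characterization, first suppose the independent coupling is optimal and $\tr[\rho\omega]=0$. Then $D^2=\mathrm{Cost}(\omega\otimes\rho^T)=2^{2n+1}$ by the formula above. Conversely, suppose $D^2=2^{2n+1}$. Since $\omega\otimes\rho^T\in\cC(\rho,\omega)$, we get
\begin{align}
    2^{2n+1}=D^2\leq 2^{2n+1}-2^{n+1}\tr[\rho\omega]. \nonumber
\end{align}
Because $\tr[\rho\omega]\geq 0$ for positive operators, this forces $\tr[\rho\omega]=0$. Then the independent coupling has cost exactly $2^{2n+1}=D^2$, so it attains the infimum and is optimal.

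The only genuine obstacle is the bookkeeping identity $\bra{\bra{I_n}}(\omega\otimes\rho^T)\ket{\ket{I_n}}=\tr[\rho\omega]$, where the transposes in the paper's dual-space convention must be handled correctly. I would check it by testing on rank-one $\rho=\ket{x}\bra{x}$, $\omega=\ket{y}\bra{y}$. In that case $\ket{\ket{I_n}}$ paired with $\ket{y}\otimes\bra{x}^T$ gives $\inner{x}{y}$, so the quantity equals $\abs{\inner{x}{y}}^2=\tr[\rho\omega]$, and bilinearity extends the identity to all $\rho,\omega$.
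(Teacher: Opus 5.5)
Your proof is correct and essentially the paper's: you write the cost via $C_{\text{sym},n}=2^{2n+1}I_n\otimes I_n^T-2^{n+1}\ket{\ket{I_n}}\bra{\bra{I_n}}$, evaluate it at $\omega\otimes\rho^T$ to get $2^{2n+1}-2^{n+1}\tr[\rho\omega]$, and read off the two saturation conditions; bounding every coupling via $\Pi\geq 0$ and using orthogonal pure states to show the supremum is attained are small additions that the paper's proof leaves implicit. One slip: your displayed double sum has the indices in the wrong order, since $\sum_{i,j}\inner{e_i}{\omega e_j}\inner{e_i}{\rho e_j}=\sum_{i,j}\omega_{ij}\rho_{ij}$ is not $\tr[\rho\omega]=\sum_{i,j}\omega_{ij}\rho_{ji}$ in general (the second factor should be $\inner{e_j}{\rho e_i}$), but the product-rule route you also give, $(\omega\otimes\rho^T)\ket{\ket{I_n}}=\ket{\ket{\omega\rho}}$, yields the correct identity and is exactly what the paper uses.
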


\begin{proof}
    According to Proposition \ref{prop:spectral-resolution} and in particular eq. \eqref{eq:n-qubit-symmetric-cost-spectral} there, $C_{\text{sym},n}=2^{2n+1}I_n\otimes I_n^T-2^{n+1} \ket{\ket{I_n}}\bra{\bra{I_n}}$ and hence the cost of the independent coupling of any $\rho, \omega \in \cS\ler{\C^{2^n}}$ takes the simple form 
    \begin{align} \label{eq:indep-coupling-cost}
        \tr_{\C^{2^n} \otimes \ler{\C^{2^n}}^*}\lesq{\omega \otimes \rho^T C_{\text{sym},n}}
        &=\tr_{\C^{2^n} \otimes \ler{\C^{2^n}}^*}\lesq{\omega \otimes \rho^T \ler{2^{2n+1}I_n\otimes I_n^T-2^{n+1} \ket{\ket{I_n}}\bra{\bra{I_n}}}} \nonumber \\
        &=2^{2n+1}-2^{n+1}\tr_{\C^{2^n}}[\rho \, \omega], %\nonumber
    \end{align}
    where we used the identity \eqref{eq:product-rule}.
    Consequently, by the definition of the quadratic quantum Wasserstein distance, 
    \begin{align} \label{eq:dist-upper-bound}
        D_{\text{sym},n}^2 (\rho, \omega) &= \inf \lers{ \tr_{\C^{2^n} \otimes \ler{\C^{2^n}}^*}\lesq{\Pi C_{\text{sym},n}} \, \middle| \, \Pi \in \cC(\rho, \omega)} \nonumber \\
        &\leq \tr_{\C^{2^n} \otimes \ler{\C^{2^n}}^*}\lesq{\omega \otimes \rho^T C_{\text{sym},n}}
        =2^{2n+1}-2^{n+1}\tr_{\C^{2^n}}[\rho \, \omega] \leq 2^{2n+1}
    \end{align}
    holds for any $\rho, \omega \in \cS\ler{\C^{2^n}},$ and the first inequality in \eqref{eq:dist-upper-bound} is saturated if and only if $\omega \otimes \rho^T$ is optimal, while the second inequality is saturated if and only if $\tr_{\C^{2^n}}[\rho \, \omega]=0.$ Here we used that $\tr_{\cH}\lesq{XY} \geq 0$ for every positive (semidefinite) $X,Y \in \cT_2(\cH).$
\end{proof}

The next step is a metric characterization of pure states. 

\begin{proposition} \label{prop:metric-char-of-pure-states}
    For a state $\rho \in \cS\ler{\C^{2^n}}$ the following are equivalent: 
    \begin{itemize}
        \item[(i)] \label{item:rho-is-pure} $\rho \in \cP_1\ler{\C^{2^n}},$ that is, $\rho$ is a pure state;
        \item[(ii)] \label{item:rho-has-partners} there exist states $\rho_1, \rho_2, \dots, \rho_{2^n-1} \in \cS\ler{\C^{2^n}}$ such that $D_{\text{sym},n}(\rho_j,\rho_k)=\mathrm{diam}\ler{\cS\ler{\C^{2^n}},D_{\text{sym}, n}}=2^{n+1/2}$ for all $j,k \in \lers{0,1,\dots, 2^n-1}, \, j \neq k,$ where $\rho_0=\rho.$
    \end{itemize}
\end{proposition}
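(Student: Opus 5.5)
We prove the two implications separately, using Proposition \ref{prop:diameter-of-S-n} as the main tool. That result says $D_{\text{sym},n}(\rho,\omega)=2^{n+1/2}$ forces $\tr[\rho\,\omega]=0$, and conversely this value is attained once the independent coupling is the only (hence optimal) coupling and $\tr[\rho\,\omega]=0$.

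\emph{(i) $\Rightarrow$ (ii).} Let $\rho=\ket{\psi_0}\bra{\psi_0}$ be pure. Extend $\psi_0$ to an orthonormal basis $\psi_0,\dots,\psi_{2^n-1}$ of $\C^{2^n}$ and set $\rho_j=\ket{\psi_j}\bra{\psi_j}$. Fix $j\neq k$. Since $\rho_j$ is pure, $\cC(\rho_j,\rho_k)$ is the singleton $\{\rho_k\otimes\rho_j^T\}$, as remarked after \eqref{eq:part-trace-def}. So the independent coupling is optimal, and \eqref{eq:indep-coupling-cost} gives
\[
D^2_{\text{sym},n}(\rho_j,\rho_k)=2^{2n+1}-2^{n+1}\tr[\rho_j\rho_k]=2^{2n+1}.
\]
Hence $D_{\text{sym},n}(\rho_j,\rho_k)=2^{n+1/2}$, as required.

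\emph{(ii) $\Rightarrow$ (i).} Suppose $\rho_0=\rho,\rho_1,\dots,\rho_{2^n-1}$ are pairwise at distance $2^{n+1/2}$. By Proposition \ref{prop:diameter-of-S-n}, $\tr[\rho_j\rho_k]=0$ for all $j\neq k$. For positive semidefinite operators, $\tr[\rho_j\rho_k]=\|\rho_j^{1/2}\rho_k^{1/2}\|_2^2=0$ implies $\rho_j\rho_k=0$. Therefore the supports $\mathrm{supp}(\rho_j)$ are pairwise orthogonal subspaces of $\C^{2^n}$. There are $2^n$ of them, each nonzero since every $\rho_j$ has unit trace. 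Their dimensions are at least $1$ and sum to at most $2^n$, so each support is one-dimensional. In particular $\rho_0=\rho$ has rank one, i.e.\ $\rho$ is pure.

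The only step needing care is the deduction $\tr[XY]=0\Rightarrow XY=0$ for positive $X,Y$, which gives orthogonality of supports. This follows from writing $\tr[XY]=\tr[(X^{1/2}Y^{1/2})^*(X^{1/2}Y^{1/2})]$, whose vanishing forces $X^{1/2}Y^{1/2}=0$ and hence $XY=0$. The rest is dimension counting together with the two earlier propositions.
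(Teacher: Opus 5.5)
Your proposal is correct and follows essentially the same route as the paper. For (i)$\Rightarrow$(ii) you extend to an orthonormal basis and use the singleton product coupling, and for (ii)$\Rightarrow$(i) you get Hilbert--Schmidt orthogonality from Proposition~\ref{prop:diameter-of-S-n} and then count dimensions of $2^n$ pairwise orthogonal nonzero supports. Your justification that $\tr[XY]=0$ forces $XY=0$ for positive $X,Y$ spells out a fact the paper only states.
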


\begin{proof}

We start with the direction (i) $\Rightarrow$ (ii). If $\varrho\in\mathcal{P}_1\ler{\mathbb{C}^{2^n}}$, then $\varrho=\ket{\psi}\bra{\psi}$ with a unit vector $\psi\in\mathbb{C}^{2^n}.$
We can consider an orthonormal basis $\{\psi_0:=\psi, \psi_1,\ldots,\psi_{2^n-1}\}$ in $\mathbb{C}^{2^n}$ and the corresponding projections of rank $1$:
\begin{align}
\varrho_j:=\ket{\psi_j}\bra{\psi_j},\quad \forall j\in\{0,1,\ldots,2^n-1\}. \nonumber  
\end{align}
Pure states admit only one coupling, the product coupling, that is, $\cC(\rho_j, \rho_k)=\lers{\rho_k\otimes \rho_j^T},$ and hence 
\begin{align} %\label{eq:D-sym-distance-on-pure-states}
    D_{\text{sym},n}^2(\rho_j,\rho_k)
    =\tr_{\C^{2^n} \otimes \ler{\C^{2^n}}^*}\lesq{\rho_k \otimes \rho_j^T C_{\text{sym},n}}
        =2^{2n+1}-2^{n+1}\tr_{\C^{2^n}}[\rho_j \, \rho_k].   \nonumber
\end{align}
Therefore, $D_{\text{sym},n}(\rho_j,\rho_k)=2^{n+1/2}$ is equivalent to $\tr_{\C^{2^n}}[\rho_j \, \rho_k]=0$ which is the case as $\tr_{\C^{2^n}}[\rho_j \, \rho_k]=\abs{\inner{\psi_j}{\psi_k}}^2$ and $\psi_k$ is orthogonal to $\psi_j$ whenever $k \neq j.$
\par
To see the converse direction (ii) $\Rightarrow$ (i) recall that by Proposition \ref{prop:diameter-of-S-n}, two states can realize the diameter only if they are Hilbert-Schmidt orthogonal, that is, $D_{\text{sym},n}(\rho_j,\rho_k)=2^{n+1/2}$ implies that $\tr_{\C^{2^n}}[\rho_j \, \rho_k]=0.$ The Hilbert-Schmidt inner product of two positive operators is zero if and only if their ranges are orthogonal, and hence $\mathrm{ran}(\varrho_j)\perp\mathrm{ran}(\varrho_k)$ for every $j\neq k.$ Every density operator has positive rank, so the pairwise orthogonality of the $2^n$ density operators $\rho_0, \rho_1, \dots, \rho_{2^n-1}$ acting on $\C^{2^n}$ implies that $\dim(\mathrm{ran}(\varrho_j))=1$ for all $j \in {0,1, \dots, 2^n-1},$ and hence $\varrho_j\in \mathcal{P}_1(\mathbb{C}^{2^n}).$ In particular, $\rho=\rho_0$ is a pure state.
\end{proof}
The above characterization of pure states refers only to the symmetric quantum Wasserstein distance, and hence pure states are preserved by isometries, as stated in the following corollary. 

\begin{corollary} \label{cor:pure-states-are-preserved}
If $\Phi: \cS\ler{\C^{2^n}} \to \cS\ler{\C^{2^n}}$ is an isometry with respect to the quantum Wasserstein distance $D_{sym, n},$ then $\Phi\ler{\cP_1\ler{\C^{2^n}}} \subseteq \cP_1\ler{\C^{2^n}},$ that is, $\Phi(\rho)$ is a pure state whenever $\rho$ is so.    
\end{corollary}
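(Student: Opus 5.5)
The plan is to transport the metric characterization of Proposition \ref{prop:metric-char-of-pure-states} through $\Phi$, using nothing beyond the distance-preservation property. In particular, injectivity, surjectivity and continuity of $\Phi$ are never needed.

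Let $\rho \in \cP_1\ler{\C^{2^n}}$. By the implication (i) $\Rightarrow$ (ii) of Proposition \ref{prop:metric-char-of-pure-states}, there exist states $\rho_1, \dots, \rho_{2^n-1}$ such that, with $\rho_0=\rho$,
\begin{align}
D_{\text{sym},n}(\rho_j,\rho_k)=2^{n+1/2} \qquad (j \neq k). \nonumber
\end{align}
Concretely, one can take the rank-one projections onto an orthonormal basis extending the unit vector spanning the range of $\rho$.

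Next, set $\omega_j:=\Phi(\rho_j)$ for $j \in \lers{0,1,\dots,2^n-1}$. Since $\Phi$ preserves $D_{\text{sym},n}$, we get $D_{\text{sym},n}(\omega_j,\omega_k)=D_{\text{sym},n}(\rho_j,\rho_k)=2^{n+1/2}$ for all $j\neq k$. By Proposition \ref{prop:diameter-of-S-n}, this common value is exactly the diameter of $\ler{\cS\ler{\C^{2^n}},D_{\text{sym},n}}$. Hence the states $\omega_1, \dots, \omega_{2^n-1}$ serve as partners of $\omega_0=\Phi(\rho)$ as required in condition (ii). The implication (ii) $\Rightarrow$ (i) of Proposition \ref{prop:metric-char-of-pure-states} then yields $\Phi(\rho) \in \cP_1\ler{\C^{2^n}}$.

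There is no genuine obstacle here. The one point to check is that condition (ii) only requires some states $\omega_j$ realizing the pairwise distances, not that they be distinct. Even so, they are automatically distinct: two states at the diameter distance are Hilbert-Schmidt orthogonal, and a state cannot be orthogonal to itself since its trace is $1$. So the characterization applies verbatim to the images under $\Phi$.
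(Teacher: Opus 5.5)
Your proposal is correct and essentially identical to the paper's proof: apply (i)$\Rightarrow$(ii) of Proposition \ref{prop:metric-char-of-pure-states}, carry the diameter-realizing configuration through $\Phi$ using only distance preservation, and conclude with (ii)$\Rightarrow$(i). Your closing remark that the images are automatically distinct is not needed, since condition (ii) does not require distinctness, but it is harmless and correct.
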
 

\begin{proof}
    By Proposition \ref{prop:metric-char-of-pure-states}, if $\rho \in \cP_1\ler{\C^{2^n}},$ then there exist $\rho_1, \dots, \rho_{2^n-1}$ such that $D_{\text{sym},n}(\rho_j,\rho_k)=2^{n+1/2}$ for all $j,k \in \lers{0,1,\dots, 2^n-1}, \, j \neq k.$ This implies that also $D_{\text{sym},n}\ler{\Phi(\rho_j),\Phi(\rho_k)}=2^{n+1/2}$ for all distinct $j$ and $k,$ as $\Phi$ is an isometry. Consequently, by Proposition \ref{prop:metric-char-of-pure-states} again, $\Phi(\rho) \in \cP_1\ler{\C^{2^n}}.$
\end{proof}

This means that the restriction $\Phi_{|\cP_1\ler{\C^{2^n}}}$ of a Wasserstein isometry $\Phi$ to the set of pure states is a $\cP_1\ler{\C^{2^n}} \to \cP_1\ler{\C^{2^n}}$ map that preserves the distance $D_{\text{sym},n}.$ However, on pure states, the Wasserstein distance $D_{\text{sym},n}$ takes the simple form 
\begin{align}
    D_{\text{sym},n}^2(\rho, \omega)=2^{2n+1}-2^{n+1}\tr_{\C^{2^n}}[\rho \, \omega] \qquad \ler{\rho, \omega \in \cP_1\ler{\C^{2^n}}}
\end{align}
as pure states admit only the tensor product coupling.
Therefore, the condition
\begin{align} 
     D_{\text{sym},n}^2\ler{\Phi(\rho), \Phi(\omega)}= D_{\text{sym},n}^2(\rho, \omega) \qquad \ler{\rho, \omega \in \cP_1\ler{\C^{2^n}}} \nonumber
\end{align}
is equivalent to
\begin{align} 
     \tr_{\C^{2^n}}\lesq{\Phi(\rho) \Phi(\omega)}= \tr_{\C^{2^n}}[\rho \, \omega] \qquad \ler{\rho, \omega \in \cP_1\ler{\C^{2^n}}}, \nonumber
\end{align}
which means that $\Phi_{|\cP_1\ler{\C^{2^n}}}: \cP_1\ler{\C^{2^n}} \to \cP_1\ler{\C^{2^n}}$ is a map that preserves the transition probability of states. Here we recall Wigner's famous theorem \cite{wigner1931gruppentheorie} (see also \cite{Lomont-Mendelson-Wigner, Bargmann-Wigner, Molnar-Wigner, Geher-Wigner}) describing the structure of transition probability preserving maps on pure states. As our work concerns finite-dimensional Hilbert spaces, but we work with a map which is not assumed to be injective or surjective, we state the finite-dimensional and non-bijective version of Wigner's theorem \cite{Geher-Wigner} for simplicity.

\begin{theoremext}[Wigner's theorem, finite-dimensional, non-bijective version]
If $\cH$ is a finite-dimensional Hilbert space, and $T: \cP_1(\cH) \to \cP_1(\cH)$ is a map that preserves the transition probability between pure states, that is, 
\begin{align}
    \tr_{\cH}[T(\rho) T(\omega)]=\tr_{\cH}[\rho \, \omega] \qquad \ler{\rho, \omega \in \cP_1\ler{\cH}}, \nonumber
\end{align}
then $T$ is a unitary or anti-unitary conjugation, that is, there exists a unitary or anti-unitary operator $U:\cH \to \cH$ such that
\begin{align}
    T(\rho)=U \rho U^* \qquad \ler{\rho \in\cP_1(\cH)}.    
\end{align}
\end{theoremext}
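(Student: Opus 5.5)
We plan to reduce the statement to the classical bijective Wigner theorem by an explicit coordinate argument. In finite dimensions this amounts to reconstructing the (anti-)unitary $U$ by hand. Throughout, write $P_x=\ket{x}\bra{x}$ for a unit vector $x\in\cH$, and note that $\tr_{\cH}[P_xP_z]=\abs{\inner{x}{z}}^2$.

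\textbf{Step 1 (orthonormal bases go to orthonormal bases).} Fix an orthonormal basis $e_1,\dots,e_d$ of $\cH$ and write $T(P_{e_i})=P_{f_i}$. Transition probability $0$ is preserved, so the unit vectors $f_i$ are pairwise orthogonal. Since $\dim\cH=d$, they form an orthonormal basis. Consequently, if $T(P_x)=P_y$, then $\abs{\inner{f_i}{y}}=\abs{\inner{e_i}{x}}$ for all $i$, so the moduli of the coordinates of $y$ are determined by those of $x$.

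\textbf{Step 2 (fixing phases).} The vectors $f_i$ are only determined up to phase, and we use this freedom. For $k\ge 2$ we have $T(P_{(e_1+e_k)/\sqrt2})=P_{(f_1+c_kf_k)/\sqrt2}$ with $\abs{c_k}=1$, because by Step 1 only the $1$st and $k$th coordinates are nonzero, each of modulus $1/\sqrt2$. Replacing $f_k$ by $c_kf_k$, we may assume $c_k=1$. Next, comparing with the image of $P_{(e_1+ie_k)/\sqrt2}$ shows that this image is $P_{(f_1\pm i f_k)/\sqrt2}$, with a sign $\varepsilon_k\in\{\pm1\}$. Now take a general unit vector $x=\sum_k x_ke_k$ with $x_1\neq0$, write $T(P_x)=P_y$, and normalize the phase of $y$ so that $\inner{f_1}{y}=\abs{x_1}$. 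The transition probabilities of $x$ with $(e_1+e_k)/\sqrt2$ and with $(e_1+ie_k)/\sqrt2$ determine both $\mathrm{Re}$ and $\mathrm{Im}$ of $\bar x_1x_k$. This forces $\inner{f_k}{y}=\frac{\abs{x_1}}{x_1}x_k$ or its conjugate, according to $\varepsilon_k$.

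\textbf{Step 3 (consistency of linear versus antilinear).} This is the step I expect to be the main obstacle. We must show that $\varepsilon_k$ is independent of $k$; for $d=2$ there is nothing to show. For $j\neq k$, I would first test with the vector $(e_1+ie_j+ie_k)/\sqrt3$. If $\varepsilon_j\neq\varepsilon_k$, its image would have coordinates proportional to $(1,i,-i)$ in the $f$-basis. The transition probability with $(e_1+e_j+e_k)/\sqrt3$ would then change from $5/9$ to $1/9$, which is a contradiction. If this specific test turns out inconclusive in some edge case, I would instead use $(e_j+ie_k)$-type vectors. Once all $\varepsilon_k$ agree, define $U$ as the linear map with $Ue_k=f_k$ if $\varepsilon\equiv+1$, and as the antilinear map with $U(\sum x_ke_k)=\sum\bar x_kf_k$ otherwise. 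Step 2 then gives $T(P_x)=UP_xU^*$ for every $x$ with $x_1\neq0$.

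\textbf{Step 4 (remaining pure states).} Let $x$ be a unit vector with $x_1=0$, and write $T(P_x)=P_y$. For every unit $z$ with $z_1\neq0$, we have
\begin{align}
\tr_{\cH}\lesq{P_y\, UP_zU^*}=\tr_{\cH}\lesq{T(P_x)T(P_z)}=\abs{\inner{x}{z}}^2=\tr_{\cH}\lesq{UP_xU^*\, UP_zU^*}. \nonumber
\end{align}
Such vectors $z$ are dense in the unit sphere, and $U$ is surjective. Hence $P_y$ and $UP_xU^*$ have the same transition probability with every pure state, so they coincide, and $T(P_x)=UP_xU^*$. This proves the theorem. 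Alternatively, the whole argument may be replaced by citing \cite{Geher-Wigner} directly.
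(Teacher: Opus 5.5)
Your proof is correct, but it takes a different route from the paper. The paper gives no proof of this statement: it quotes it from Gehér \cite{Geher-Wigner}. That result covers Hilbert spaces of arbitrary dimension and, in general, yields a linear or antilinear \emph{isometry} implementing $T$, which is automatically unitary or anti-unitary when $\dim\cH<\infty$.

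You instead build $U$ directly in coordinates, and you use finite dimensionality exactly where it is needed. In Step~1, the images of an orthonormal basis are $d$ orthonormal vectors, hence a basis. This makes your $U$ surjective without any bijectivity assumption on $T$. In infinite dimensions this step fails, which is why the general theorem only produces an isometry.

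The individual steps all go through:
\begin{itemize}
\item In Step~2, the test states $(e_1+e_k)/\sqrt2$ and $(e_1+ie_k)/\sqrt2$ do recover $\mathrm{Re}$ and $\mathrm{Im}$ of $\bar x_1x_k$. Together with the moduli from Step~1 and $\inner{f_1}{y}=\abs{x_1}\neq0$, this gives $\inner{f_k}{y}=\frac{\bar x_1}{\abs{x_1}}x_k$ or its conjugate, as you claim.
\item The Step~3 test is conclusive in every case, so no fallback is needed. Either mixed sign pattern sends $(e_1+ie_j+ie_k)/\sqrt3$ to a vector proportional to $(1,i,-i)$ or $(1,-i,i)$. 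Both give transition probability $1/9$ with the image of $(e_1+e_j+e_k)/\sqrt3$, instead of the original $5/9$.
\item The density argument in Step~4 is sound, because the (anti-)unitary $U$ maps the unit sphere homeomorphically onto itself.
\end{itemize}

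One cosmetic point: your opening sentence says you reduce to the bijective Wigner theorem, but you never invoke it. What you wrote is a direct construction, which is better.

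The trade-off: citing Gehér gives brevity and full generality. Your argument makes the paper self-contained, at the cost of about a page of elementary computation. It also shows that only a handful of test states plus a continuity argument are needed in the finite-dimensional case the paper actually uses.
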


So by Wigner's theorem, for every transformation $\Phi: \cS\ler{\C^{2^n}} \to \cS\ler{\C^{2^n}}$ that preserves $D_{\text{sym},n}$ there exists a unitary or anti-unitary $U$ such that $\Phi_{|\cP_1\ler{\C^{2^n}}}$ is of the form $\Phi(\rho)=U \rho U^*.$
Now let $\Phi$ be an isometry of $\cS\ler{\C^{2^n}}$ with respect to $D_{\text{sym},n},$ and let $U$ be a unitary or anti-unitary operator such that $\Phi(\rho)=U \rho U^*$ for all $\rho \in \cP_1\ler{\C^{2^n}}.$ As we have seen in Proposition \ref{prop:Wigner-symmetries-are-isometries}, the map $\rho \mapsto U^* \rho U$ is a $D_{\text{sym},n}$-isometry of $\cS\ler{\C^{2^n}},$ and hence the composition $\widetilde\Phi$ defined by 
\begin{align} \label{eq:Phi-tilde-def}
    \widetilde\Phi(\rho):=U^* \Phi(\rho) U \qquad \ler{\rho \in \cS\ler{\C^{2^n}}}
\end{align}
is also a $D_{\text{sym},n}$-isometry with the notable property that $\widetilde\Phi(\rho)=\rho$ for all $\rho \in \cP_1\ler{\C^{2^n}}.$ We proceed by showing that $\widetilde\Phi$ is necessarily the identity on the whole state space $\cS\ler{\C^{2^n}}.$ Let $\rho \in \cP_1\ler{\C^{2^n}}$ and $\omega \in \cS\ler{\C^{2^n}}$ be arbitrary. Then $\cC(\rho, \omega)=\{\omega \otimes \rho^T\},$ and hence 
\begin{align} \label{eq:on-one-hand}
    D_{\text{sym},n}^2(\rho, \omega)=2^{2n+1}-2^{n+1}\tr_{\C^{2^n}}[\rho \omega]. 
\end{align}
Moreover, $\cC(\rho, \widetilde\Phi(\omega))=\lers{\widetilde\Phi(\omega) \otimes \rho^T},$ and consequently, using the isometric property of $\widetilde\Phi$ on $\cS\ler{\C^{2^n}}$ we get 
\begin{align} \label{eq:on-the-other-hand}
    D_{\text{sym},n}^2(\rho, \omega)
    =D_{\text{sym},n}^2(\widetilde\Phi(\rho), \widetilde\Phi(\omega))
    =D_{\text{sym},n}^2(\rho, \widetilde\Phi(\omega))=2^{2n+1}-2^{n+1}\tr_{\C^{2^n}}[\rho \widetilde\Phi(\omega)]. 
\end{align}
So \eqref{eq:on-one-hand} and \eqref{eq:on-the-other-hand} imply that $\tr_{\C^{2^n}}[\rho \widetilde\Phi(\omega)]=\tr_{\C^{2^n}}[\rho \omega]$ for all $\rho \in \cP_1\ler{\C^{2^n}}$ and $\omega \in \cS\ler{\C^{2^n}}.$ In other words, $\bra{\psi}\widetilde\Phi(\omega) \ket{\psi}=\bra{\psi}\omega \ket{\psi}$ for all unit vectors $\psi \in \C^{2^n}$ and all states $\omega \in \cS\ler{\C^{2^n}}.$
The quadratic form of a self-adjoint operator uniquely determines the operator itself, and hence we just deduced that $\widetilde\Phi(\omega)=\omega$ for all $\omega \in \cS\ler{\C^{2^n}}.$ That is, $\widetilde\Phi$ is indeed the identity of $\cS\ler{\C^{2^n}}.$ We defined $\widetilde\Phi$ as the composition of $\Phi$ and the unitary or anti-unitary conjugation $\rho \mapsto U^* \rho U,$ see \eqref{eq:Phi-tilde-def}, and hence $\widetilde\Phi$ being the identity of $\cS\ler{\C^{2^n}}$ is equivalent to 
\begin{align}
    \Phi(\rho)=U \rho U^* \qquad \ler{\rho \in \cS\ler{\C^{2^n}}}.
\end{align}
This means that every quantum Wasserstein isometry of $\cS\ler{\C^{2^n}}$ with respect to the distance $D_{\text{sym},n}$ is a unitary or anti-unitary conjugation, as desired, and hence the proof of Theorem \ref{thm:n-qubit-symmetric} is complete. 

\medskip

\paragraph*{{\bf Acknowledgment.}} We thank the anonymous referee for his/her valuable comments and insightful suggestions.

%%% Bibliography
\begin{small}
\bibliographystyle{plainurl}  % ama, nar, alpha, plain, chicago, abbrv, siam
\bibliography{references.bib}
\end{small}

\end{document}